\def\tsc#1{\csdef{#1}{\textsc{\lowercase{#1}}\xspace}}
\theoremstyle{case}
\newtheorem{case}{Case}
\newtheorem{remark}{Remark}
\newtheorem{prop}{Proposition}[section]
\newtheorem{theorem}{Theorem}[section]
\newtheorem*{theorem*}{Theorem}
\newtheorem*{prop*}{Proposition}
\DeclareMathOperator*{\argmin}{arg\,min}
\tikzset{bullet/.style={circle,fill,inner sep=2pt}}
\begin{document}
\let\WriteBookmarks\relax
\def\floatpagepagefraction{1}
\def\textpagefraction{.001}
\shorttitle{DRNV}
\shortauthors{Singh et~al.}

\title [mode = title]{Distributionally Robust Newsvendor with Moment Constraints}                      
%\tnotemark[1,2]

%\tnotetext[1]{This document is the results of the research
%   project funded by the National Science Foundation.}

%\tnotetext[2]{The second title footnote which is a longer text matter
%  to fill through the whole text width and overflow into
% another line in the footnotes area of the first page.}

\author[1]{Derek Singh}
\cormark[1]
%\fnmark[1]
\ead{singh644@umn.edu}
\cortext[cor1]{Corresponding author}

\author[1]{Shuzhong Zhang}
%\fnmark[2]
\ead{zhangs@umn.edu}

\address[1]{Department of Industrial and Systems Engineering, University of Minnesota, Minneapolis, MN 55455}

\ExplSyntaxOn
\keys_set:nn { stm / mktitle } { nologo }
\ExplSyntaxOff

\begin{abstract}
This paper expands the work on distributionally robust newsvendor to incorporate moment constraints. The use of Wasserstein distance as the ambiguity measure is preserved. 
%The equivalent profit maximization and cost minimization formulations of the classical newsvendor problem are considered. 
The infinite dimensional primal problem is formulated; problem of moments duality is invoked to derive the simpler finite dimensional dual problem. An important research question is: How does distributional ambiguity affect the optimal order quantity and the corresponding profits/costs? To investigate this, some theory is developed and a case study in auto sales is performed. We conclude with some comments on directions for further research.
\end{abstract}

\begin{keywords}
robust newsvendor \sep moments duality \sep distributionally robust optimization \sep Wasserstein distance \sep Lagrangian duality
%quadrupole exciton \sep polariton \sep \WGM \sep \BEC
\end{keywords}

\maketitle

\section{Introduction and Overview}
\subsection{Newsvendor Model}
The newsvendor model is a classical problem in inventory control that has been extensively studied under a variety of settings. It is a decision problem in which the decision maker wants to determine the optimal order quantity to balance overage and underage costs to minimize losses (or alternatively maximize profits) during the next sales period, between orders. Overage costs result from ordering too much product, in excess of realized demand, which leads to waste. Underage costs result from ordering too little, short of realized demand, which leads to lost sales. The classical setting is a single decision period, for a non-perishable product, where the demand distribution is known. Scarf \citep{scarf1958min} solved the original distributionally robust formulation, which assumes knowledge of both first and second moments. \par
In this paper, we extend that work to allow for more precise information regarding the underlying distribution, as measured in terms of Wasserstein distance. This allows us to modulate the degree of ambiguity and investigate its effect on the optimal order quantity and the corresponding profits/losses.
This is done via the framework of Wasserstein discrepancy between distributions and the problem of moments duality results. First we formulate the distributionally robust newsvendor decision problem to account for ambiguity in distribution. Next, we formulate and solve the simpler finite dimensional dual problem. After that, we use industry data to investigate the practical effects of ambiguity. \par
An outline of this paper is as follows. Section 1 gives on overview of the preliminary material as well as a literature review. Section 2 develops the main theoretical results to characterize the distributionally robust newsvendor model with moment constraints. Section 3 conducts a case study using a five year historical data set of monthly sales of Tesla automobiles. Section 4 discusses conclusions and suggestions for further research. All detailed proofs are deferred to the Appendix. \par

\subsection{Literature}
As mentioned previously, the original work in this problem domain was initaited by Scarf \citep{scarf1958min}. Since then, further research has been done on the topic of distributional robustness. One paper of note by Gallego and Moon \citep{gallego1993distribution} uses a different choice of parameters and a presents an alternate proof for Scarf's ordering rule. According to the authors, this choice of parameters and method of proof are simpler and more interpretable. The development of distributionally robust optimization with moment-based ambiguity sets led to extensions that considered multiple items, risk aversion, and additional distributional properties such as asymmetry and multimodality \citep{hanasusanto2015distributionally,natarajan2018asymmetry}. \par
More recent, and alternative, work in distributional robustness is focused on the idea of distributionaly ambiguity as measured through the statistical distance between distributions. A few of the seminal papers in this area, that use the Wasserstein distance framework, are by Gao and Kleywegt \citep{gao2016distributionally}, Esfahani and Kuhn \citep{Esfahani17}, and Blanchet and Karthyek \citep{blanchetFirst}. Applications of this framework include topics in machine learning, portfolio management, inventory control, and others. A relevant example is the work by Lee et al.\ \citep{lee2020data}, that studies the data-driven distributionally robust risk-averse \\ newsvendor model under Wasserstein ambiguity. The authors make the case for dispensing with the first two moment constraints (difficulty in estimation, challenges in obtaining this information in a practical setting) and analyze the resulting solutions in settings with/without risk aversion and for Wasserstein distances of order $p=1$ and $p > 1$.

\subsection{Preliminary Material}
\subsubsection{Definitions}
The empirical measure $P_n$ is defined as
%\begin{equation*}
$P_n := \frac{1}{n} \sum_{i=1}^{n} \delta_{x_i}$
where $x_i$ denotes the $i\textsuperscript{th}$ realization of demand and $\delta_{x_i}$ is a Dirac measure.
The ambiguity set for probability measures is
%\begin{equation*}
$U_{\delta}(P_n) = \{P: W_d(P,P_n) \leq \delta\}$
%\end{equation*}
where $W_d$ is the Wasserstein discrepancy with associated distance function $d(\cdot,\cdot)$. 
%$W_d$ is given by
%\small
\begin{equation*}
W_d(P,P') = \inf_\pi \{ \mathbb{E}^\pi[d(X,Y)]: X \sim P, Y \sim P' \}.
\end{equation*}
%\normalsize
Here $d(X,Y)$ is the distance between random variables $X$ and $Y$ that follow distributions $P$ and $P'$ respectively, and the $\inf$ is taken over all joint distributions $\pi$ with marginals $P$ and $P'$. To simplify the analysis, this work uses the (squared) Euclidean distance $d(x,y) = \| x-y \|^2_2 = \sum_{i=1}^n (x_i - y_i)^2$ \citep{zhao2018data}. 

\subsubsection{Moments Duality}
In Section 2 we formulate the primal decision problem for the distributionally robust newsvendor with moment constraints. Problem of moments duality results will be used to formulate the simpler yet equivalent dual problem. In this context, to specify the first moment constraint $M_1(X) = \mathbb{E}^P[ X ] = \mu$ and the second moment constraint $M_2(X) = \mathbb{E}^P[ X^2 ] = (\mu^2 + \sigma^2)$, we appeal to the strong duality of linear semi-infinite programs. The dual problem appears to be more tractable than the primal problem since it only involves the (finite dimensional) empirical measure $P_n$ as opposed to a continuum of probability measures. This allows us to solve a data-driven nested optimization problem defined by the chosen data set. A brief restatement of the duality result can be found in \cite{singh2020tight}. See Appendix B of \cite{blanchet2019robust} and Proposition 2 of \cite{blanchetMV} or the original work \cite{isii1962sharpness} for further details. \par

\section{Theory: DRNV}
This section formulates the primal and dual problems for the distributionally robust newsvendor (DRNV). A polynomial time algorithm, using the directional descent (DD) method, is developed to compute the solution to the dual.
\subsection{Equivalent Objective Functions}
This work will use the cost minimization form of the newsvendor decision problem. In this subsection, we show the equivalence between the cost minimization and profit maximization forms. The cost (objective) function is %given by
\begin{equation*}
\Pi_1(Q) = c_1 \mathbb{E} (X-Q)^+ + c_2 \mathbb{E} (Q-X)^+
\end{equation*}
where $c_1$ and $c_2$ denote the underage and overage costs, X denotes random demand, and Q denotes the order quantity (decision variable).
The profit (objective) function is
\begin{equation*}
\Pi_2(Q) = (p-s) \, \mathbb{E} \min(Q,X) - (c-s) Q
\end{equation*}
where $p > c > s > 0$ for sale price $p$, unit cost $c$, and salvage value $s$. Let us use the relations
\begin{align*}
\mathbb{E} \min(Q,X) &= Q - \mathbb{E} (Q-X)^+ = \mu - \mathbb{E} (X-Q)^+ \\
\mathbb{E} \max(Q,X) &= Q + \mathbb{E} (X-Q)^+ = \mu + \mathbb{E} (Q-X)^+
\end{align*}
to substitute for $Q$, $\min(Q,X)$, in the profit function to get
\begin{align*}
\Pi_2(Q) &= (p-s) (Q - \mathbb{E} (Q-X)^+) - (c-s) Q \\
&= (s-p) \mathbb{E} (Q-X)^+ + (p-c) Q \\
&= (s-p) \mathbb{E} (Q-X)^+ + (p-c) (\mu + \mathbb{E} (Q-X)^+ \nonumber \\
&- \mathbb{E} (X-Q)^+) \\
&= (p-c) (\mu - \mathbb{E} (X - Q)^+) - (c-s) \mathbb{E} (Q-X)^+.
%&= (p-c) \bar{X} - (p-c) \mathbb{E} (X - Q)^+ - (c-s) \mathbb{E} (Q-X)^+.
\end{align*}
Observe that maximizing $\Pi_2(Q)$ is equivalent to minimizing $\Pi_1(Q)$ for $c_1 = p-c > 0$ and $c_2 = c-s > 0$.
\subsection{Primal Problem}
Let us characterize constraints via the set 
\small
\[ C := {\{ X \in \mathbb{R}_+: \\ M_1(X) = \mu, M_2(X) = (\mu^2+\sigma^2), P \in U_{\delta}(P_n) \}}. \]
\normalsize
where $\mathbb{R}_+$ denotes the restriction to non-negative realized demand.
The primal problem can then be written as 
\begin{equation*}
\inf_{Q \geq 0} \sup_{X \in C} \; c_1 \mathbb{E} (Q-X)^+ + c_2 \mathbb{E} (X-Q)^+ \tag{P}
\end{equation*}
where $Q$ is the order quantity, $X$ is demand, and $\delta$ is the radius of the Wasserstein ball of distributions centered at $P_n$.
\subsection{Dual Problem}
Following the approach in \citep{singh2020tight}, let $\lambda := (\lambda_1,\lambda_2,\lambda_3)$, define $\Gamma := {\{ \lambda : \lambda_1 \geq 0, \lambda_2, \lambda_3 \}}$, and apply moments duality to write the dual problem as
\begin{equation*} \label{d1}
\inf_{Q \geq 0} \inf_{\lambda \in \Gamma} \; F(\lambda,Q) :=  \lambda_1 \delta + \lambda_2 \mu + \lambda_3 (\sigma^2 + \mu^2) + \\ \sum_{i=1}^n \Psi_i(\lambda,Q) \tag{D}
\end{equation*}
where $\Psi_i(\lambda,Q) := -\lambda_1 x_i^2 + \sup_{x \geq 0} g_i(x,\lambda,Q)$ for
\begin{equation*}
g_i(x,\lambda,Q) := c_1 (x-Q)^+ + c_2(Q-x)^+ - a x^2 + 2b_i x
\end{equation*}
for $a := \lambda_1+\lambda_3$, $b_i := \lambda_1 x_i - \lambda_2/2$. The constraint $a > 0$ ensures $F(\lambda,Q)$ has finite value. Note that (\ref{d1}) is jointly convex in $(\lambda,Q)$. Our approach to solve (\ref{d1}) has two steps: (i) solve the inner problem in closed form and develop a polynomial time algorithm to compute its solution, (ii) use a bisection method to compute the solution to the outer problem. Overall, the algorithm finishes in polynomial time.
\begin{prop}
\setlength{\parindent}{0pt}
Let $a > 0$, $Q \geq 0$, then
\small
\[
\sup_{x \geq 0} g_i(x,\lambda,Q) = \begin{cases}
g_{i0}, & \text{if } (\lambda_1,\lambda_2) \in \tilde{R}_{i0}, \\
%g_{ih}, & \text{if } (\lambda_1,\lambda_2)  \in  \tilde{R}_{ih}, \\
g_{i1}, & \text{if } (\lambda_1,\lambda_2)  \in  \tilde{R}_{i1}, \\
g_{i2}, & \text{if } (\lambda_1,\lambda_2)  \in  \tilde{R}_{i2}, 
\end{cases}
\]
\normalsize
where $x_{i1} = (2b_i - c_2)/(2a)$, $x_{i2} = (2b_i+c_1)/(2a)$, 
%where $x_{i1} = \frac{b_i - c_2/2}{a}$, $x_{i2} = \frac{b_i+c_1/2}{a}$, $\lambda_2 - 2x_i\lambda_1 = \frac{b_i}{a}$, \\
$\tilde{c} = c_1+c_2$, the $g_{ij}$ functions are given by
\begin{align*}
g_{i0} &= c_2 Q, \\
%g_{ih} &= c_1 (x_h-Q) -a x_h^2 + 2 b_i x_h, \\
g_{i1} &= c_2(Q-x_{i1}) -a x_{i1}^2 + 2 b_i x_{i1}, \\
g_{i2} &= c_1(x_{i2}-Q) -a x_{i2}^2 + 2 b_i x_{i2}, 
\end{align*}
the mapping of regions to optimal solutions $x^*$ is given by Table 1,
\begin{table}[!htb]
  \begin{center}
    \caption{Mapping of Regions to Optimal Solutions.}
    \label{tab:table1}
    \begin{tabular}{|c|c|} % <-- Changed to S here.
	\hline
      \textbf{Region $\tilde{R}_{ij}$} & \textbf{Optimal Solution $x^*$} \\
      \hline
      $\tilde{R}_{i0}$ & $0$ \\
	 \hline	
      $\tilde{R}_{i1}$ & $x_{i1}$ \\
	\hline
      $\tilde{R}_{i2}$ & $x_{i2}$ \\
	\hline
    \end{tabular}
  \end{center}
\end{table}
and the $\tilde{R}_{ij}$ regions are defined for four cases, according to the relative positions of $\lambda_2$ intercepts $\beta_1 := c_1$, $\beta_2 := -c_2$, $\beta_3 := (c_1-c2)/2 - 2aQ$, $\beta_4 := c_1 -2\sqrt{a\tilde{c}Q}$.
%whether $c_1 - 2a x_h \gtreqqless -c_2 - 2a x_l$ respectively.
\setcounter{case}{0}
\begin{case}
$\beta_3 \geq \beta_2 \cap \beta_4 \in [\beta_2,\beta_1]$. \\
\[
\tilde{R}_{i0} :=  R_{i1} \cup R_{i2}, \tilde{R}_{i1} := \varnothing, \tilde{R}_{i2} := R_{i3} \cup R_{i4},
\]
where regions $R_{ij} \subseteq \{ \lambda_1 \geq 0, \lambda_2 \}$  are given by
\begin{align*}
R_{i1} &:= \{ x_{i2} \leq 0 \}, \\
R_{i2} &:= \{ x_{i2} \geq 0 \cap \lambda_2 - 2x_i\lambda_1 \geq \beta_4 \}, \\
R_{i3} &:= \{ x_{i1} \leq 0 \cap \lambda_2 - 2x_i\lambda_1 \leq \beta_4 \}, \\
R_{i4} &:= \{ x_{i1} \geq 0 \}. 
%R_{i6} &:= \{ (\lambda_1 \geq 0, \lambda_2 ): x_{i1} \geq x_h \},
\end{align*}
%for intercept $\beta_3 = w_1 c_1 - (1-w_1) c_2 - a(x_l+x_h)$,
%where $w_1 = \frac{(x_h-Q)}{(x_h-x_l)} \in [0,1]$.
\end{case}

\begin{case}
$\beta_3 < \beta_2 \cap \beta_4 \in [\beta_2,\beta_1]$. \\
\[
\tilde{R}_{i0} := R_{i1} \cup R_{i2}, \tilde{R}_{i1} := R_{i4}, \tilde{R}_{i2} := R_{i3} \cup R_{i5},
\]
where $R_{ij}$ regions are given by
\begin{align*}
R_{i1} &:= \{ x_{i2} \leq 0 \}, \\
R_{i2} &:= \{ x_{i2} \geq 0 \cap \lambda_2 - 2x_i\lambda_1 \geq \beta_4 \}, \\
R_{i3} &:= \{ x_{i1} \leq 0 \cap \lambda_2 - 2x_i\lambda_1 \leq \beta_4 \}, \\
R_{i4} &:= \{ x_{i1} \geq 0 \cap \lambda_2 - 2x_i\lambda_1 \geq \beta_3 \}, \\
R_{i5} &:= \{ \lambda_2 - 2x_i\lambda_1 \leq \beta_3 \}. 
\end{align*}
%for intercept $\beta_3 = (c_1-c_2)/2 - 2a Q$.
\end{case}

\begin{case}
$\beta_3 < \beta_2 \cap \beta_4 \notin [\beta_2,\beta_1]$. \\
\[
\tilde{R}_{i0} :=  R_{i1}, \tilde{R}_{i1} := R_{i3}, \tilde{R}_{i2} := R_{i2}  \cup R_{i4},
\]
where $R_{ij}$ regions are given by
\begin{align*}
R_{i1} &:= \{ x_{i2} \leq 0 \}, \\
R_{i2} &:= \{ x_{i2} \geq 0 \cap x_{i1} \leq 0 \}, \\
R_{i3} &:= \{ x_{i1} \geq 0 \cap \lambda_2 - 2x_i\lambda_1 \geq \beta_3 \}, \\
R_{i4} &:= \{ \lambda_2 - 2x_i\lambda_1 \leq \beta_3 \}.
\end{align*}
%for intercept $\beta_3 = (c_1-c_2)/2 - 2a Q$.
\end{case}

\begin{case}
$\beta_3 \geq \beta_2 \cap \beta_4 \notin [\beta_2,\beta_1]$. 
\[
\tilde{R}_{i0} := R_{i1}, \tilde{R}_{i1} := \varnothing, \tilde{R}_{i2} := R_{i2} \cup R_{i3},
\]
where $R_{ij}$ regions are given by
\begin{align*}
R_{i1} &:= \{ x_{i2} \leq 0 \}, \\
R_{i2} &:= \{ x_{i1} \leq 0 \cap x_{i2} \geq 0 \}, \\
R_{i3} &:= \{ x_{i1} \geq 0 \}. 
\end{align*}
\end{case}
\end{prop}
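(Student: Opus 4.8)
The plan is to treat $g_i(\cdot,\lambda,Q)$ as a one-dimensional maximization over $x\ge 0$ and exploit its piecewise-quadratic structure. First I would split the domain at the kink $x=Q$. For $x\ge Q$ one has $g_i=-ax^2+(2b_i+c_1)x-c_1Q$ and for $x\le Q$ one has $g_i=-ax^2+(2b_i-c_2)x+c_2Q$; since $a>0$ each branch is a concave parabola, with unconstrained maximizers $x_{i2}=(2b_i+c_1)/(2a)$ and $x_{i1}=(2b_i-c_2)/(2a)$ respectively, and $x_{i2}-x_{i1}=\tilde c/(2a)>0$. The one-sided derivatives at $Q$ differ by $c_1+c_2=\tilde c>0$, so $g_i$ has an upward (convex) kink there; consequently $x=Q$ is never a local maximizer, and the supremum over $x\ge 0$ is attained at one of $\{0,x_{i1},x_{i2}\}$. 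Evaluating at $x=0$, which lies on the $x\le Q$ branch, gives $g_i(0)=c_2Q=g_{i0}$.

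Next I would put the two interior candidates into closed form. Substituting the stationarity identities $2b_i=2ax_{i1}+c_2$ and $2b_i=2ax_{i2}-c_1$ into the definitions of $g_{i1},g_{i2}$ collapses the cross terms and yields the compact expressions $g_{i1}=c_2Q+ax_{i1}^2$ and $g_{i2}=ax_{i2}^2-c_1Q$. In particular $g_{i1}\ge g_{i0}$ always, so $x=0$ can beat the left branch only when $x_{i1}$ is itself infeasible. The selection of $x^*$ therefore reduces to two ingredients: feasibility of the interior maximizers, namely $x_{i1}\in[0,Q]$ and $x_{i2}\in[Q,\infty)$, and the pairwise value comparisons among $g_{i0},g_{i1},g_{i2}$.

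To turn these into the half-plane conditions defining the $R_{ij}$, I would introduce the single linear coordinate $t:=\lambda_2-2x_i\lambda_1$, so that $2b_i=-t$, $x_{i1}=(-t-c_2)/(2a)$, and $x_{i2}=(c_1-t)/(2a)$. The sign conditions then read $x_{i2}\ge 0\iff t\le c_1=\beta_1$ and $x_{i1}\ge 0\iff t\le -c_2=\beta_2$. For the value comparisons, a direct computation gives $g_{i1}-g_{i2}=\tilde c\big(Q-(4b_i+c_1-c_2)/(4a)\big)$, which is nonnegative exactly when $t\ge(c_1-c_2)/2-2aQ=\beta_3$, and $g_{i2}-g_{i0}=ax_{i2}^2-\tilde cQ$, which for $x_{i2}\ge0$ is nonnegative exactly when $t\le c_1-2\sqrt{a\tilde cQ}=\beta_4$. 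Thus $\beta_3$ is the $g_{i1}$-versus-$g_{i2}$ boundary and $\beta_4$ the $g_{i2}$-versus-$g_{i0}$ boundary, while $\beta_1,\beta_2$ are the sign boundaries; each elementary region $R_{ij}$ is an intersection of such half-planes in the $(\lambda_1,\lambda_2)$ plane.

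Finally I would organize the global selection by the relative positions of the four intercepts. The two binary features are the sign of $\beta_3-\beta_2=\tilde c/2-2aQ$ and whether $\beta_4$ lies in $[\beta_2,\beta_1]$; these generate the four cases, and within each case the optimal $x^*$ is read off by letting $t$ decrease from $+\infty$ to $-\infty$, recording which feasible candidate attains the largest value, and then merging consecutive $t$-intervals sharing the same optimizer into the stated $\tilde R_{i0},\tilde R_{i1},\tilde R_{i2}$. I expect the main obstacle to be precisely this bookkeeping: one must verify that the box constraints $x_{i1}\le Q$ and $x_{i2}\ge Q$, which are genuinely required for a stationary point to be the maximizer of its branch, are consistently implied by (or subsumed into) the sign conditions $\beta_1,\beta_2$ and the value thresholds $\beta_3,\beta_4$ in every case, and that the assembled $\tilde R_{ij}$ are pairwise disjoint and cover $\{\lambda_1\ge0,\lambda_2\}$. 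Reconciling this interaction, especially in the regime where an interior maximizer is feasible in sign but not relative to $Q$, is the delicate step; the remaining work (the two closed forms and the two comparison identities) is routine algebra.
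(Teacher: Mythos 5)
Your proposal is correct and follows essentially the same route as the paper's proof: reduce the supremum to the candidate set $\{0,x_{i1},x_{i2}\}$ via the piecewise-concave-quadratic structure with a convex kink at $x=Q$, translate the sign and value comparisons into thresholds on $\lambda_2-2x_i\lambda_1$ against the intercepts $\beta_1,\dots,\beta_4$, and assemble the four cases from their relative positions. Your explicit comparison identities (e.g.\ $g_{i1}-g_{i2}=\tilde c\,(Q-(x_{i1}+x_{i2})/2)$, which is exactly the paper's cut line $\tilde Q$) make concrete what the paper dispatches with ``some algebra gives,'' and the box-constraint worry you flag is resolved within each case by the case-defining inequalities (for instance $\beta_3\ge\beta_2$ is equivalent to $Q\le\tilde c/(4a)=(x_{i2}-x_{i1})/2$, which forces $x_{i2}\ge Q$ whenever $x_{i1}\ge0$).
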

%Let $a > 0$, $Q \geq 0$, then
%\small
%\[
%\sup_{x \in [x_l,x_h]} g_i(x,\lambda,Q) = \begin{cases}
%\tilde{c}(\alpha_2^* - Q)^+ + 2 a \alpha_1 \alpha_2^* - a (\alpha_2^*)^2 + c_2 Q, & \text{if } Q \leq \tilde{Q} \\
%\tilde{c}(\alpha_1^* - Q)^+ + 2 a \alpha_1 \alpha_1^* - a (\alpha_1^*)^2 + c_2 Q, & \text{if } Q > \tilde{Q} 
%\end{cases}
%\]
%\normalsize
%\text{where} $\alpha_1 = \frac{b_i - c_2/2}{a}$, $\alpha_2 = \frac{b_i+c_1/2}{a}$, $\alpha_1^* = \min{\{\max{\{\alpha_1,x_l\}},x_h\}}$, 
%$\alpha_2^* = \min{\{\max{\{\alpha_2,x_l\}},x_h\}}$, $\tilde{c} = c_1+c_2$, $\tilde{Q} =  (\alpha_1 + \alpha_2)/2$.
%%\alpha_2 + 2 \tilde{a} \alpha_1 \alpha_2 - \tilde{a} (\alpha_1^2 + \alpha_2^2)$.
%\end{prop}
\begin{proof}
The proof first solves the unconstrained problem to deduce that the set of possible values for optimal solution $x^*$ for the constrained problem is $S := \{0,x_{i1},x_{i2}\}$. From there, some analysis is done, for each of the four cases, to partition the $(\lambda_1 \geq 0, \lambda_2)$ half-plane into regions and to construct the mapping between regions and optimal solutions. Substituting the appropriate $x^*$ per region into $g_i$ gives the result. See Appendix for the detailed proof.
%The proof considers three cases: $x^* < Q$, $x^* = Q$, and $x^* > Q$ where $x^*$ denotes the critical point for $g_i$. For the first case, $x^*$ evaluates to $\alpha_2^*$. The second case violates the first order optimality condition and does not occur. For the third case, $x^*$ evaluates to $\alpha_1^*$. Substituting for $x^*$ into $g_i$ and working out the bracketing $Q$ values (to solve for $\tilde{Q}$) leads to the result. See Appendix for the detailed proof.
\end{proof}

\begin{theorem}
DD method evaluates $f(\xi,Q) := \min_{\{\lambda_1 \geq 0,\lambda_2\}} \\  F(\lambda_1,\lambda_2,\xi,Q)$ for $\xi := \lambda_1+\lambda_3 > 0$, in polynomial time.
\end{theorem}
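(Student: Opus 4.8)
The plan is to fix $\xi=a>0$ and $Q$, eliminate $\lambda_3=\xi-\lambda_1$, and show that $F(\lambda_1,\lambda_2,\xi,Q)$, as a function of $(\lambda_1,\lambda_2)$ on the half-plane $\{\lambda_1\ge 0\}$, is convex and piecewise quadratic with pieces induced by only $O(n)$ lines. Convexity is inherited from the joint convexity of (D). For the piecewise structure I would read off the Proposition the observation that every region boundary in all four cases is an affine condition in the single quantity $u_i := \lambda_2-2x_i\lambda_1=-2b_i$: one has $x_{i1}\ge 0 \iff u_i\le\beta_2$, $x_{i2}\ge 0 \iff u_i\le\beta_1$, and the remaining conditions are exactly $u_i\lessgtr\beta_3$ or $u_i\lessgtr\beta_4$, where $\beta_1,\dots,\beta_4$ are constants once $a,Q$ are fixed. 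Substituting $x_{i1},x_{i2}$ collapses the branch values to
\[ g_{i0}=c_2 Q,\qquad g_{i1}=c_2 Q+\frac{(u_i+c_2)^2}{4a},\qquad g_{i2}=-c_1 Q+\frac{(u_i-c_1)^2}{4a}, \]
so that $\sup_{x\ge 0}g_i$ is a one-dimensional convex piecewise-quadratic $\phi_i(u_i)$ with at most four breakpoints and $\Psi_i=-\lambda_1 x_i^2+\phi_i(\lambda_2-2x_i\lambda_1)$. Summing over $i$, $F$ equals an affine function plus $\sum_i\phi_i(\lambda_2-2x_i\lambda_1)$, whose only breakpoints are the $O(n)$ lines $\lambda_2-2x_i\lambda_1=\beta_j$ (parallel within each $i$, slope $2x_i$). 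These cut the half-plane into an arrangement of $O(n^2)$ cells, on each of which $F$ is an explicit convex quadratic. This reduction is, in my view, the heart of the matter.

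With the structure in hand I would describe DD as a descent scheme built from exact line minimizations, and the key cost estimate is that $F$ restricted to any line in the $(\lambda_1,\lambda_2)$ plane is a one-dimensional convex piecewise-quadratic with $O(n)$ breakpoints (at most four per $\phi_i$). Its exact minimizer is then computed in $O(n\log n)$ time: list the breakpoints along the line, sort them, sweep until the monotone directional derivative changes sign, and solve the single active quadratic in closed form. Hence each DD iteration costs $O(n\log n)$, and evaluating $f(\xi,Q)$ reduces to bounding the number of line minimizations.

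To bound that number I would organize the search as a nested one-dimensional descent: for fixed $\lambda_1$ the minimization over $\lambda_2$ is one line minimization and yields $h(\lambda_1):=\min_{\lambda_2}F$, convex by partial minimization; the outer minimization of $h$ over $\lambda_1\ge 0$ is then governed by the sign of the monotone derivative $h'$. The breakpoints of $h'$ occur only when the inner optimizer $\lambda_2^\ast(\lambda_1)$ crosses one of the $O(n)$ breakpoint lines, so a descent in $\lambda_1$ that advances past one configuration of active pieces per step meets each configuration at most once; the boundary case $\lambda_1=0$ and the four fixed orderings of $\beta_1,\dots,\beta_4$ contribute only finitely many extra cases. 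Multiplying the polynomial step count by the $O(n\log n)$ per-step cost gives the stated polynomial running time.

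The main obstacle is exactly this last step — upgrading ``finite'' to a genuine \emph{polynomial} bound. One must preclude the descent path from zig-zagging across the $O(n^2)$ cells indefinitely, e.g.\ by proving that each exact line search either reaches optimality or permanently removes a breakpoint from further consideration, so that the sequence of active-piece configurations is monotone and cannot recur. I would also need to settle attainment: that $F(\cdot,\cdot,\xi,Q)$ is bounded below and coercive on the feasible half-plane (using $a=\xi>0$ and the nonnegative quadratic growth of the $\phi_i$), since otherwise $f(\xi,Q)=-\infty$ and DD must instead certify unboundedness; and to check that degeneracies among the slopes $2x_i$ or intercepts $\beta_j$ do not inflate the breakpoint count.
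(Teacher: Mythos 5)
Your structural reduction is correct and in places sharper than what the paper records: rewriting the branch values as convex quadratics in the single variable $u_i=\lambda_2-2x_i\lambda_1=-2b_i$ recovers exactly the paper's arrangement of lines $\lambda_2=2x_i\lambda_1+\beta_j$ (the $L_{ij}$ of the DD pseudocode), with $\mathcal{O}(n)$ lines cutting the half-plane into $\mathcal{O}(n^2)$ cells on each of which $F$ is an explicit convex quadratic; your identities for $g_{i1}$ and $g_{i2}$ check out. The gap is the one you flag yourself: you never establish a polynomial bound on the number of descent steps. Moreover, the nested coordinate-descent scheme you substitute in your third paragraph is not the DD method the theorem is about, so even a completed analysis along those lines would prove a different statement.

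The mechanism that closes the gap is the one you name only as a desideratum in your last paragraph, and it is exactly what the paper's (terse) proof invokes as the ``once-only traversal property.'' Each DD step performs an \emph{exact} minimization of $F$ over one piece of the arrangement --- a cell $\Gamma$, via the globally defined quadratic $F_\Gamma$, or a ray $L_{ij}$ emanating from the current vertex --- and the iterate values of $F$ strictly decrease. Since each piece carries a single convex quadratic, any return to a previously searched piece would produce a value no smaller than that piece's exact minimum, which was already attained or strictly improved upon; hence no cell or ray is ever revisited, zig-zagging is impossible, and the number of steps is bounded by the number of pieces, $\mathcal{O}(n^2)$. This is the standard active-set/simplex-type counting argument. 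Writing it down, together with the coercivity check you correctly note is needed for attainment when $\xi>0$ (the quadratic growth of the $\phi_i$ in the $u_i$ supplies it) and the observation that degenerate coincidences among the slopes $2x_i$ or intercepts $\beta_j$ only merge cells and cannot increase their number, is what turns your outline into a proof; your $\mathcal{O}(n\log n)$ per-step cost is then compatible with the claimed polynomial bound.
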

\begin{proof}
Note the DD method can evaluate $f(\xi,Q)$ in at most $\mathcal{O}(n^2)$ operations, as it searches $\mathcal{O}(n^2)$ line segments and regions that partition the $\{ \lambda_1 \geq 0, \lambda_2 \}$ half-plane, and it is a descent method that only needs to traverse each line segment and/or region once. This once-only traversal property holds due to the joint convexity of $F(\lambda_1,\lambda_2,\xi,Q)$.
\end{proof}

\begin{algorithm}[!htb]
\renewcommand{\thealgocf}{}
\SetAlgorithmName{DD method}{} %last arg is the title of listing table
\DontPrintSemicolon
	\KwInput{$\{\xi\:,Q\:,\: \{x_i\}\:,\: N\:,n\:,\:\delta\:,\:\mu\:,\:\sigma\}$}  
	\KwOutput{$\{y_{\xi,Q} = f(\xi,Q)\}$}
%  	\KwData{Testing set $x$}
	%Set $Q^* = Q_N$ \;
	Sort $\{ x_i \}$ Decreasing \; %for $i \in \{1,...,n\}$ \;
	Define intercepts $\{\beta_j\}$ as in Proposition 2.1\;
	% Define $\{ U_i(\lambda_1 ):=  2 \lambda_1 x_i - 2 \theta \tau \}$ \;
	%Construct lines $\{\lambda_2 = U_i(\lambda_1 \geq 0) \}$ where $U_i ( \lambda_1 ) := 2\lambda_1 x_i - 2\xi \tau$ \;
	Construct lines $\{\lambda_2 = L_{ij}(\lambda_1 \geq 0) \}$ where $L_{ij}( \lambda_1 ) :=  2\lambda_1 x_i + \beta_j \; \forall j $\;
	%Compute $\{ V_m \}$, the set of vertices $(\lambda_1,\lambda_2)$ where either $\{U_i \cap L_j \neq \emptyset\}$ or $\lambda_2 \in \{ U_i(\lambda_1=0) \} \cup \{ L_i(\lambda_1=0) \} $ \;
%    	Set $k=0$ and the initial search point to be the $\lambda_2$ intercept given by $L_i(\lambda_1 := 0)$ \;
%$\lambda_c(k) = V_0$, the vertex with the smallest value for $F$ \;
	Compute $\{ V_\tau \}$, the set of vertices $(\lambda_1,\lambda_2)$ where either two or more lines $L_{ij}$ intersect or $\lambda_2 \in \{ L_{ij}(\lambda_1=0) \} $ \;
    	Set $k=0$ and the initial search point to $\lambda_c(k) = V_{\tau^*}$, the vertex with the smallest value for $F$ \;
	\While {$k < N$} 
	{
		Search adjacent regions $\Gamma$ for descent directions $\lambda^\circ_c(k) + t d_\gamma$ where we move towards the min value $\lambda^*_\gamma$ for $F_\Gamma$ \;
		\tcc*[h]{Here $F_\Gamma$ is defined such that $\{\Psi_i\}$ have the same functional form across the entire $(\lambda_1, \lambda_2)$ plane as in region $\Gamma$, where $\Gamma$ is defined by the supporting lines $L_{ij}. \;\; \lambda^\circ_c(k)$ is an interior point to region $\Gamma$ within $\epsilon$ of  $\lambda_c(k).$ Observe that the number of regions $\Gamma$ is $\mathcal{O}(n^2)$.} 
		%If we encounter a descent direction in that region $\Gamma$, such that $F(\lambda^*_\gamma) < F(\lambda_c(k))$, then move towards $\lambda^*_\gamma$ \;
		\\
		\If{ $F(\lambda^*_\gamma) < F(\lambda_c(k))$ }
		{
			%If we do not cross any $U_i$ or $L_i$ lines along the way, $k = k+1$, $\lambda_c(k+1) = \lambda^*_\Gamma$. \;
			\If{ $\lambda^\circ_c(k) + t d_\gamma \: \cap \: \{ L_i \} = \emptyset$ }
		    {
				$\lambda_c(k+1) := \lambda^*_\gamma$ \;
		    }
		    \Else
		    {
				$\{\lambda_m\} :=  \lambda^\circ_c(k) + t d_\gamma \: \cap \: \{ L_{ij} \}$ \;
				$\lambda_c(k+1) := \argmin_{\{\lambda_m\}} \| \lambda_c(k) - \lambda_m \|$ \;
		    }
			$k = k+1$ \;
			\Continue \;

		}
		Search along adjacent rays $R$ (the line segments $\pm{\vec{L_{ij}}}$ emanating from point $\lambda_c(k)$) for descent directions $\lambda_c(k) + t d_r$ where we move towards a critical point $\lambda^*_r$ with zero directional derivative for $F$, so $D_{d_r} \, F(\lambda^*_r) = 0$ \;
		\If{ $\{d_r : D_{d_r} \, F(\lambda^*_r) = 0 \} \neq \emptyset$ }
		{
			$\lambda_c(k+1) := \argmin_{\{ \lambda^*_r \}} \| \lambda_c(k) - \lambda^*_r \|$  \;
			$k = k+1$ \;
		}
		\Else
		{
			\tcc*[h]{There are no descent directions via regions or rays so we are at the min value.} \\
			\Return $y_{\xi,Q} = F(\lambda_c(k))$ \;
		}
		
	}
\caption{Directional Descent method to compute $f(\xi,Q)$ for (D1)}
\end{algorithm}
%\begin{remark}
%A Matlab implementation of the DD method is available from the corresponding author upon reasonable request.
%\end{remark}

\begin{remark}
The equation $\lambda_2 = L_{ij}(\lambda_1)$ is derived from algebraic manipulation and substitution for $(\lambda_1,\lambda_2)$ in $Q = \tilde{Q}$.
\end{remark}

\begin{prop}
The solution to (\ref{d1}) can be computed in polynomial time.
\end{prop}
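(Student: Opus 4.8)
The plan is to peel the three nested infima in (\ref{d1}) from the inside out, exploiting the fact, already noted after (D), that $F(\lambda,Q)$ is jointly convex in $(\lambda,Q)$. Writing $\xi := \lambda_1 + \lambda_3$, the map $(\lambda_1,\lambda_2,\lambda_3,Q) \mapsto (\lambda_1,\lambda_2,\xi,Q)$ is an invertible linear change of variables, so $F$ remains jointly convex viewed as a function of $(\lambda_1,\lambda_2,\xi,Q)$; moreover the only binding requirement beyond $\lambda_1 \ge 0$ is the finiteness condition $a = \xi > 0$, since $\lambda_3 = \xi - \lambda_1$ is otherwise free. The innermost problem, minimizing over $(\lambda_1,\lambda_2)$ with $(\xi,Q)$ held fixed, is exactly the quantity $f(\xi,Q)$ of Theorem~2.1, which that theorem evaluates in $\mathcal{O}(n^2)$ operations via the DD method.

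First I would record that partial minimization of a jointly convex function over the convex set $\{\lambda_1 \ge 0\}$ leaves a convex function of the remaining variables; hence $f(\xi,Q)$ is jointly convex in $(\xi,Q)$ on the domain $\xi > 0$, $Q \ge 0$. This reduces (\ref{d1}) to the two-variable convex program $\inf_{\xi>0,\,Q\ge 0} f(\xi,Q)$, which is precisely the outer problem flagged in the two-step approach preceding Proposition~2.1.

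Next I would solve this outer program by nested bisection. For each fixed $Q$, convexity makes $\xi \mapsto f(\xi,Q)$ unimodal, so a bisection on its (sub)derivative locates the minimizer and its value $h(Q) := \min_{\xi>0} f(\xi,Q)$ to accuracy $\epsilon$ in $\mathcal{O}(\log(1/\epsilon))$ calls to the DD method; the same partial-minimization argument shows $h$ is convex in $Q$, so a second, outer bisection in $Q$ finds the global optimum in a further $\mathcal{O}(\log(1/\epsilon))$ steps. Assembling the pieces yields total cost $\mathcal{O}\!\left(n^2 \log^2(1/\epsilon)\right)$, polynomial in the data size $n$ and in $\log(1/\epsilon)$.

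The main obstacle I anticipate is not the operation count but the boundary analysis that legitimizes the bisection. Because $F$ is finite only when $a = \xi > 0$, the objective $f(\xi,Q)$ diverges as $\xi \to 0^+$, so I would need to argue that the infimum is attained in the interior, or equivalently confine the search to a compact box $[\xi_{\min},\xi_{\max}] \times [0,Q_{\max}]$ derived from the problem data, so that the unimodality exploited by the bisection genuinely brackets the minimizer. A secondary technical point is to state explicitly that $f(\xi,Q)$ is continuous (indeed locally Lipschitz on compact subsets of its domain), which follows from joint convexity and finiteness and is what converts the bisection tolerance into a guaranteed accuracy on the optimal value of (\ref{d1}).
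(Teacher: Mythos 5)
Your proposal is correct and follows essentially the same route as the paper: joint convexity of $F$ yields joint convexity of $f(\xi,Q)$ under partial minimization, the DD method evaluates the inner problem in $\mathcal{O}(n^2)$, and nested one-dimensional searches over $\xi$ (line search) and $Q$ (bisection on the subgradient of $h$) complete the argument. The one obstacle you flag --- legitimizing the bisection --- is addressed in the paper by noting that $\max(c_1,c_2)$ bounds the magnitude of the subgradient of $h$, which gives the Lipschitz continuity you asked for.
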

\begin{proof}
The dual problem (\ref{d1}) is jointly convex in $(\lambda,Q)$ hence $f(\xi,Q)$ is jointly convex. For fixed $(\xi,Q)$, $f(\xi,Q)$ can be evaluated, using the DD method, in at most $\mathcal{O}(n^2)$ operations to find the (global) minimum of a piecewise convex quadratic function in $(\lambda_1,\lambda_2)$. Thus, one can apply a line search method to evaluate $h(Q) := \min_{\xi > 0} f(\xi,Q)$ with $Q$ fixed. The constraint $\xi > 0$ ensures the piecewise quadratics have finite local minima. Finally, one can apply a bisection method to the subgradient of $h$ to evaluate $\min_{Q \geq 0} h(Q)$ to compute the solution to (\ref{d1}). Note that $\max(c_1,c_2)$ is an upper bound on the magnitude of the subgradient of $h$; this establishes Lipschitz continuity of $h$ and ensures the bisection method will complete in polynomial time.
%Note the algorithm completes in polynomial time.
\end{proof}

%\begin{remark}
%The method discussed in this section could be adapted to study a robust economic order quantity (EOQ) model by using a point mass for empirical demand distribution $P_n$ and omitting the moment constraints in the primal and dual problem formulations.
%\end{remark}

\section{Case Study}
%Insert some general commentary here. use \citep{teslaSales}.
Let us now investigate a practical application of the theory and algorithms developed in this work, using 4 years of monthly U.S. Tesla auto sales data from \citep{teslaSales}. Sales are recorded in units of thousands and lost sales penalty $c_1 \approx 20$ whereas overage penalty $c_2 \approx 10$ (in units of thousands of dollars). 
%This data set represents 4 years of monthly auto sales from 2016 to 2019. 
We compute the (worst case) expected costs (in millions) and optimal order quantity (in thousands) as a function of ambiguity. Figure 2 shows the trajectories approach the classical (Scarf) limits of approximately 121mm USD in cost and 12,930 in order quantity. Thus, for any alternate demand distribution of Wasserstein distance at most $\delta$, that satisfies the moment constraints, one can look up its worst case cost and order quantity. To go further, one could map the level of ambiguity $\delta$ to a statistical confidence level $\alpha \in [0,1]$ (that the Wasserstein ball contains the true distribution) using measure concentration results \cite{Carlsson2018}. \par
%A couple of examples from inventory control and financial markets are considered. 
%In general, one can compare trajectories to reach the classical limits, across moment problems and/or across data sets, on a relative basis. 
%A mapping between Wasserstein distance $\delta$ and (statistical) confidence level $\beta = 0.95$ is done via the relation \ref{R1} given in Section 1.3.2. 
For simplicity of implementation, a grid search in $(\lambda_1, \lambda_2)$, as opposed to the DD method, is used to evaluate $f(\xi,Q)$. A line search is used to evaluate $h(Q) := \min_{\xi > 0} f(\xi,Q)$ with $Q$ fixed, and the bisection method is used to evaluate $\min_{Q \geq 0} h(Q)$.
%The algorithms are adapted to solve the particular moment problems of interest (e.g. zeroth partial moment (ZPM), first partial moment (FPM), and second partial moment (SPM), for lower and/or upper tail). 
The algorithms are coded in Matlab and make use of standard functions such as \textit{bisection} and \textit{fminbnd}. No special Matlab toolboxes are needed (although parallel computing via \textit{parfor} loops for the grid search in $(\lambda_1,\lambda_2)$ requires use of that toolbox). 

\begin{figure}[!htb]
	\centering
	\caption{DD Method: Plots for $\xi=1$}%
%	\caption{Two Point Example}%
	\subfloat[$L_{ij}$ Lines]{\scalebox{0.2}[0.2]{\includegraphics{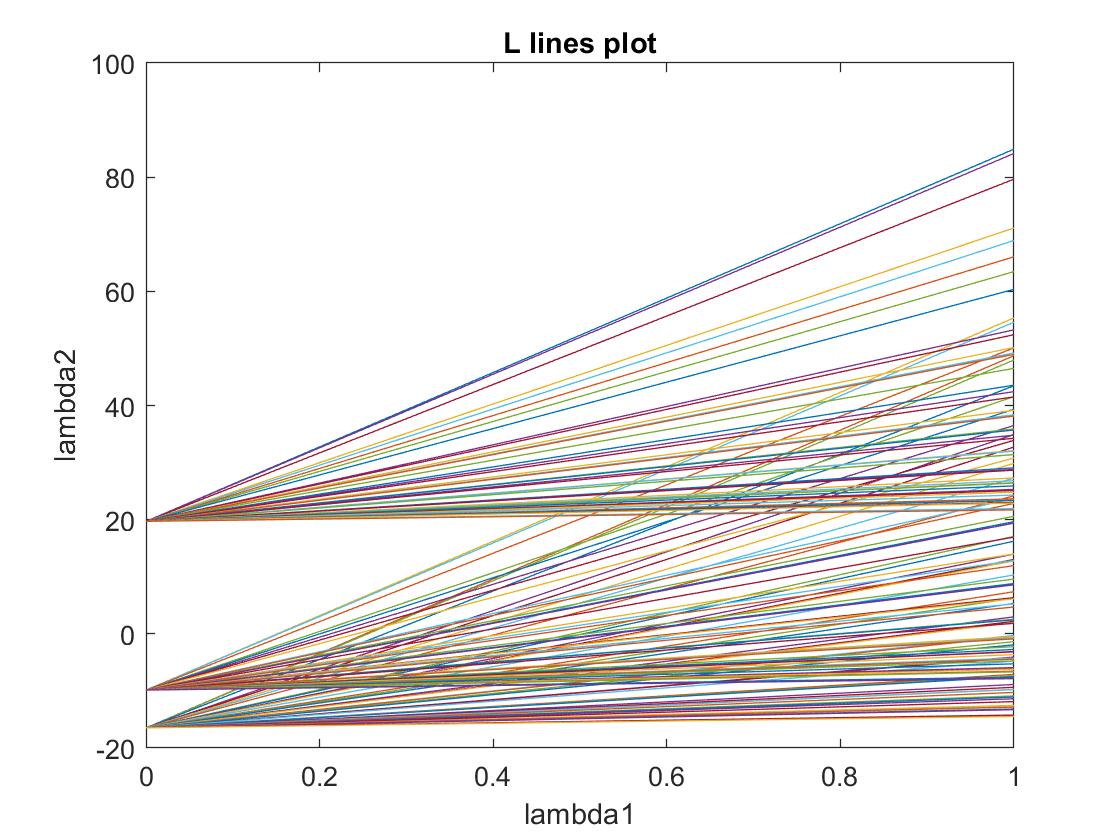}}}%
	\quad
	\subfloat[Surface Plot]{\scalebox{0.225}[0.25]{\includegraphics{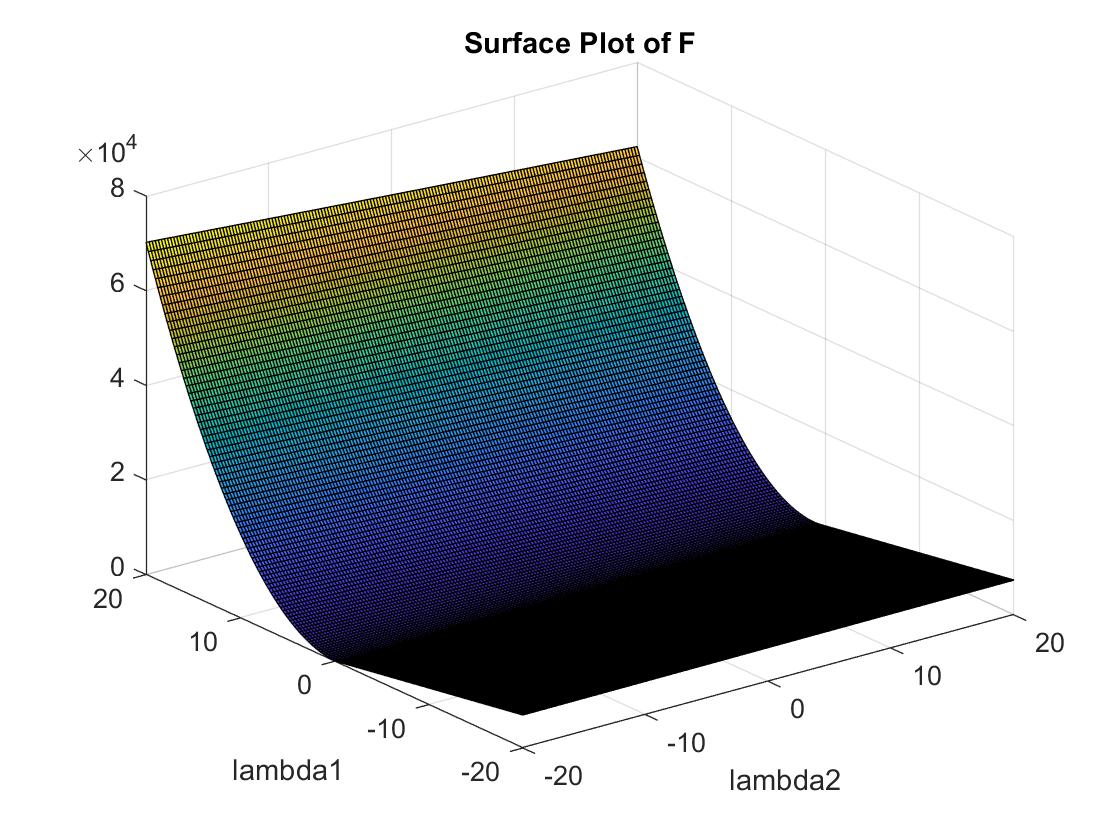}}}%
\end{figure}

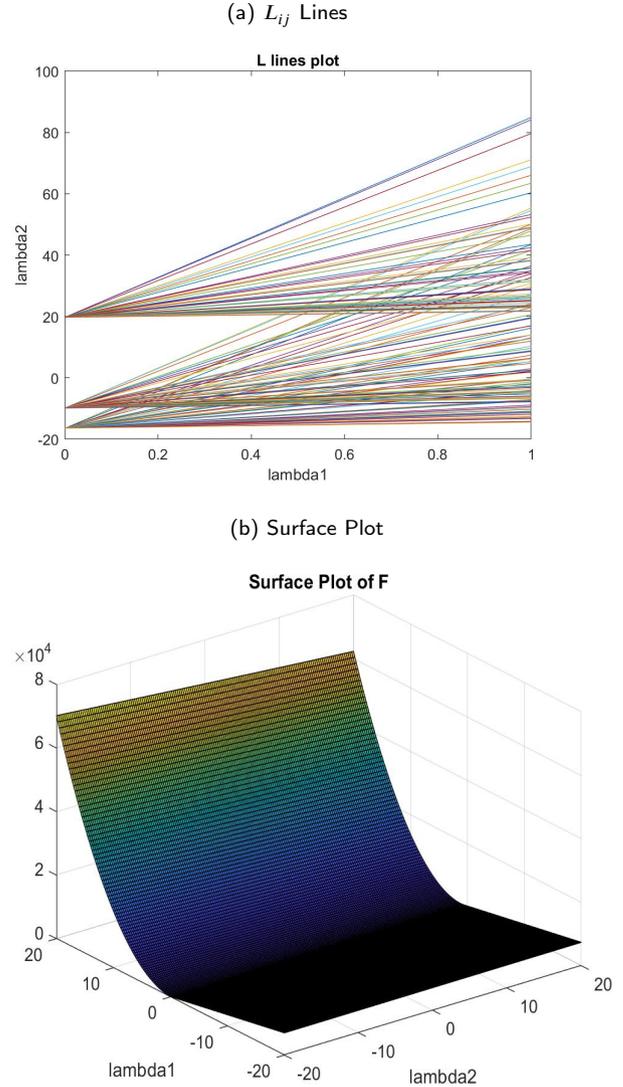
\begin{figure}[!htb]
\caption{Costs and Optimal Order Quantity}
\begin{center}
\scalebox{0.85}{
\begin{tikzpicture}

\pgfplotsset{set layers}

	\begin{axis}[legend pos=north west,
		xlabel=Delta,
		ylabel=Cost (mm),
		ymin = 100, ymax = 140,
		axis y line*=left ]
	\addplot[color=magenta,mark=x] coordinates {
		(0,103)
		(1,110)
		(2.5,113)
		(5,116)
		(10,119)
		(15,120)
		(20,121)
	}; \label{plot1_y1}
	\end{axis}

	\begin{axis}[legend pos=north west,
		xlabel=Delta,
		ylabel=Q (thousands),
		ymin = 10, ymax = 14,
		yticklabel pos=right,
		axis y line*=right,
		ylabel near ticks,
		axis x line = none ]
	\addplot[color=cyan,mark=*] coordinates {
		(0,11)
		(1,11.4)
		(2.5,11.7)
		(5,12)
		(10,12.3)
		(15,12.44)
		(20,12.6)
	}; \label{plot1_y2}

%	\begin{axis}[legend pos=north west,
%		xlabel=Delta,
%		ylabel=Probability,
%		ymin = 0.25, ymax = 1.25,
%		axis y line* = right,
%		axis x line = none ]
%	\addplot[color=brown,mark=+] coordinates {
%		(0.1,1.0)
%		(0.50,1.0)
%		(1.0,1.0)
%		(1.5,1.0)
%		(2.0,1.0)
%	}; \label{plot1_y3}

    \addlegendimage{/pgfplots/refstyle=plot1_y1}\addlegendentry{Q}
	\addlegendimage{/pgfplots/refstyle=plot1_y2}\addlegendentry{Cost}
	%\addlegendimage{/pgfplots/refstyle=plot1_y3}\addlegendentry{LC}
	\end{axis}
\end{tikzpicture}
}
\end{center}
\end{figure}

\section{Conclusions and Further Work}
This work has analyzed the distributionally robust newsvendor model (with moment constraints) using Wasserstein distance as the ambiguity measure. The robust newsvendor decision problem and requisite preliminary material were covered in Section 1. The simpler dual problem was formulated and solved in Section 2. Furthermore, a polynomial time algorithm, using the DD method, was developed to compute the solution. In Section 3, we presented a case study on Tesla automobile sales. Finally, we conclude with some commentary on directions for further research. \par
One direction for future research would be to analyze DRNV in a multiproduct and/or multiperiod setting, using the tools of semidefinite programming (SDP) and multi-stage optimization. Another direction for future research would be to explore risk-averse formulations of DRNV. Finally, perhaps a third direction for future research would be to investigate alternate forms for the objective function. \par

\section*{Data and Code Availability Statement}
The raw and/or processed data and Matlab code required to reproduce the findings from this research can be obtained from the corresponding author, [D.S.], upon reasonable request.

\section*{Conflict of Interest Statement}
The authors declare they have no conflict of interest.

\section*{Funding Statement}
The authors received no specific funding for this work.

%% Loading bibliography style file
%\bibliographystyle{model1-num-names}
\bibliographystyle{cas-model2-names}
%\bibliographystyle{model1-num-names}

% Loading bibliography database
\bibliography{DRNV}

\begin{thebibliography}{15}
\expandafter\ifx\csname natexlab\endcsname\relax\def\natexlab#1{#1}\fi
\providecommand{\url}[1]{\texttt{#1}}
\providecommand{\href}[2]{#2}
\providecommand{\path}[1]{#1}
\providecommand{\DOIprefix}{doi:}
\providecommand{\ArXivprefix}{arXiv:}
\providecommand{\URLprefix}{URL: }
\providecommand{\Pubmedprefix}{pmid:}
\providecommand{\doi}[1]{\href{http://dx.doi.org/#1}{\path{#1}}}
\providecommand{\Pubmed}[1]{\href{pmid:#1}{\path{#1}}}
\providecommand{\bibinfo}[2]{#2}
\ifx\xfnm\relax \def\xfnm[#1]{\unskip,\space#1}\fi
%Type = Misc
\bibitem[{Blanchet et~al.(2018)Blanchet, Chen and Zhou}]{blanchetMV}
\bibinfo{author}{Blanchet, J.}, \bibinfo{author}{Chen, L.},
  \bibinfo{author}{Zhou, X.Y.}, \bibinfo{year}{2018}.
\newblock \bibinfo{title}{Distributionally robust mean-variance portfolio
  selection with {W}asserstein distances}.
\newblock \bibinfo{howpublished}{arXiv preprint arXiv:1802.04885}.
%Type = Article
\bibitem[{Blanchet et~al.(2019)Blanchet, Kang and Murthy}]{blanchet2019robust}
\bibinfo{author}{Blanchet, J.}, \bibinfo{author}{Kang, Y.},
  \bibinfo{author}{Murthy, K.}, \bibinfo{year}{2019}.
\newblock \bibinfo{title}{Robust {W}asserstein profile inference and
  applications to machine learning}.
\newblock \bibinfo{journal}{Journal of Applied Probability}
  \bibinfo{volume}{56}, \bibinfo{pages}{830--857}.
%Type = Article
\bibitem[{Blanchet and Murthy(2019)}]{blanchetFirst}
\bibinfo{author}{Blanchet, J.}, \bibinfo{author}{Murthy, K.},
  \bibinfo{year}{2019}.
\newblock \bibinfo{title}{Quantifying distributional model risk via optimal
  transport}.
\newblock \bibinfo{journal}{Mathematics of Operations Research}
  \bibinfo{volume}{44}, \bibinfo{pages}{565--600}.
%Type = Article
\bibitem[{Carlsson et~al.(2018)Carlsson, Behroozi and Mihic}]{Carlsson2018}
\bibinfo{author}{Carlsson, J.G.}, \bibinfo{author}{Behroozi, M.},
  \bibinfo{author}{Mihic, K.}, \bibinfo{year}{2018}.
\newblock \bibinfo{title}{Wasserstein distance and the distributionally robust
  {TSP}}.
\newblock \bibinfo{journal}{Operations Research} \bibinfo{volume}{66},
  \bibinfo{pages}{1603--1624}.
%Type = Misc
\bibitem[{CarSalesBase(2020)}]{teslaSales}
\bibinfo{author}{CarSalesBase}, \bibinfo{year}{2020}.
\newblock \bibinfo{title}{Tesla car sales {USA}}.
\newblock \bibinfo{howpublished}{\url{//https://carsalesbase.com/us-tesla/}}.
%Type = Article
\bibitem[{Esfahani and Kuhn(2018)}]{Esfahani17}
\bibinfo{author}{Esfahani, P.M.}, \bibinfo{author}{Kuhn, D.},
  \bibinfo{year}{2018}.
\newblock \bibinfo{title}{Data-driven distributionally robust optimization
  using the {W}asserstein metric: Performance guarantees and tractable
  reformulations}.
\newblock \bibinfo{journal}{Mathematical Programming} \bibinfo{volume}{171},
  \bibinfo{pages}{115--166}.
%Type = Article
\bibitem[{Gallego and Moon(1993)}]{gallego1993distribution}
\bibinfo{author}{Gallego, G.}, \bibinfo{author}{Moon, I.},
  \bibinfo{year}{1993}.
\newblock \bibinfo{title}{The distribution free newsboy problem: review and
  extensions}.
\newblock \bibinfo{journal}{Journal of the Operational Research Society}
  \bibinfo{volume}{44}, \bibinfo{pages}{825--834}.
%Type = Misc
\bibitem[{Gao and Kleywegt(2016)}]{gao2016distributionally}
\bibinfo{author}{Gao, R.}, \bibinfo{author}{Kleywegt, A.},
  \bibinfo{year}{2016}.
\newblock \bibinfo{title}{Distributionally robust stochastic optimization with
  {W}asserstein distance}.
\newblock \bibinfo{howpublished}{arXiv preprint arXiv:1604.02199}.
%Type = Article
\bibitem[{Hanasusanto et~al.(2015)Hanasusanto, Kuhn, Wallace and
  Zymler}]{hanasusanto2015distributionally}
\bibinfo{author}{Hanasusanto, G.A.}, \bibinfo{author}{Kuhn, D.},
  \bibinfo{author}{Wallace, S.W.}, \bibinfo{author}{Zymler, S.},
  \bibinfo{year}{2015}.
\newblock \bibinfo{title}{Distributionally robust multi-item newsvendor
  problems with multimodal demand distributions}.
\newblock \bibinfo{journal}{Mathematical Programming} \bibinfo{volume}{152},
  \bibinfo{pages}{1--32}.
%Type = Article
\bibitem[{Isii(1962)}]{isii1962sharpness}
\bibinfo{author}{Isii, K.}, \bibinfo{year}{1962}.
\newblock \bibinfo{title}{On sharpness of {T}chebycheff-type inequalities}.
\newblock \bibinfo{journal}{Annals of the Institute of Statistical Mathematics}
  \bibinfo{volume}{14}, \bibinfo{pages}{185--197}.
%Type = Misc
\bibitem[{Lee et~al.(2020)Lee, Kim and Moon}]{lee2020data}
\bibinfo{author}{Lee, S.}, \bibinfo{author}{Kim, H.}, \bibinfo{author}{Moon,
  I.}, \bibinfo{year}{2020}.
\newblock \bibinfo{title}{A data-driven distributionally robust newsvendor
  model with a {W}asserstein ambiguity set}.
\newblock \bibinfo{howpublished}{Journal of the Operational Research Society}.
%Type = Article
\bibitem[{Natarajan et~al.(2018)Natarajan, Sim and
  Uichanco}]{natarajan2018asymmetry}
\bibinfo{author}{Natarajan, K.}, \bibinfo{author}{Sim, M.},
  \bibinfo{author}{Uichanco, J.}, \bibinfo{year}{2018}.
\newblock \bibinfo{title}{Asymmetry and ambiguity in newsvendor models}.
\newblock \bibinfo{journal}{Management Science} \bibinfo{volume}{64},
  \bibinfo{pages}{3146--3167}.
%Type = Misc
\bibitem[{Scarf(1958)}]{scarf1958min}
\bibinfo{author}{Scarf, H.}, \bibinfo{year}{1958}.
\newblock \bibinfo{title}{A min-max solution of an inventory problem}.
\newblock \bibinfo{howpublished}{Studies in the mathematical theory of
  inventory and production}.
%Type = Misc
\bibitem[{Singh and Zhang(2020)}]{singh2020tight}
\bibinfo{author}{Singh, D.}, \bibinfo{author}{Zhang, S.}, \bibinfo{year}{2020}.
\newblock \bibinfo{title}{Tight bounds for a class of data-driven
  distributionally robust risk measures}.
\newblock \bibinfo{howpublished}{arXiv preprint arXiv:2010.05398}.
%Type = Article
\bibitem[{Zhao and Guan(2018)}]{zhao2018data}
\bibinfo{author}{Zhao, C.}, \bibinfo{author}{Guan, Y.}, \bibinfo{year}{2018}.
\newblock \bibinfo{title}{Data-driven risk-averse stochastic optimization with
  {W}asserstein metric}.
\newblock \bibinfo{journal}{Operations Research Letters} \bibinfo{volume}{46},
  \bibinfo{pages}{262--267}.

\end{thebibliography}

%\vskip3pt

\appendix
\section{Proof of Proposition 2.1}
\begingroup
\setlength{\parindent}{0pt}
Let $a > 0$, $Q \geq 0$, then
\small
\[
\sup_{x \geq 0} g_i(x,\lambda,Q) = \begin{cases}
g_{i0}, & \text{if } (\lambda_1,\lambda_2) \in \tilde{R}_{i0}, \\
%g_{ih}, & \text{if } (\lambda_1,\lambda_2)  \in  \tilde{R}_{ih}, \\
g_{i1}, & \text{if } (\lambda_1,\lambda_2)  \in  \tilde{R}_{i1}, \\
g_{i2}, & \text{if } (\lambda_1,\lambda_2)  \in  \tilde{R}_{i2}, 
\end{cases}
\]
\normalsize
where $x_{i1} = (2b_i - c_2)/(2a)$, $x_{i2} = (2b_i+c_1)/(2a)$, 
%where $x_{i1} = \frac{b_i - c_2/2}{a}$, $x_{i2} = \frac{b_i+c_1/2}{a}$, $\lambda_2 - 2x_i\lambda_1 = \frac{b_i}{a}$, \\
$\tilde{c} = c_1+c_2$, the $g_{ij}$ functions are given by
\begin{align*}
g_{i0} &= c_2 Q, \\
%g_{ih} &= c_1 (x_h-Q) -a x_h^2 + 2 b_i x_h, \\
g_{i1} &= c_2(Q-x_{i1}) -a x_{i1}^2 + 2 b_i x_{i1}, \\
g_{i2} &= c_1(x_{i2}-Q) -a x_{i2}^2 + 2 b_i x_{i2}, 
\end{align*}
the mapping of regions to optimal solutions $x^*$ is given by Table 2,
\begin{table}[!htb]
  \begin{center}
    \caption{Mapping of Regions to Optimal Solutions}
    \label{tab:table1}
    \begin{tabular}{|c|c|} % <-- Changed to S here.
	\hline
      \textbf{Region $\tilde{R}_{ij}$} & \textbf{Optimal Solution $x^*$} \\
      \hline
      $\tilde{R}_{i0}$ & $0$ \\
	 \hline	
      $\tilde{R}_{i1}$ & $x_{i1}$ \\
	\hline
      $\tilde{R}_{i2}$ & $x_{i2}$ \\
	\hline
    \end{tabular}
  \end{center}
\end{table}
and the $\tilde{R}_{ij}$ regions are defined for four cases, according to the relative positions of $\lambda_2$ intercepts $\beta_1 := c_1$, $\beta_2 := -c_2$, $\beta_3 := (c_1-c2)/2 - 2aQ$, $\beta_4 := c_1 -2\sqrt{a\tilde{c}Q}$.
%whether $c_1 - 2a x_h \gtreqqless -c_2 - 2a x_l$ respectively.
\setcounter{case}{0}
\begin{case}
$\beta_3 \geq \beta_2 \cap \beta_4 \in [\beta_2,\beta_1]$. \\
\[
\tilde{R}_{i0} :=  R_{i1} \cup R_{i2}, \tilde{R}_{i1} := \varnothing, \tilde{R}_{i2} := R_{i3} \cup R_{i4},
\]
where regions $R_{ij} \subseteq \{ \lambda_1 \geq 0, \lambda_2 \}$  are given by
\begin{align*}
R_{i1} &:= \{ x_{i2} \leq 0 \}, \\
R_{i2} &:= \{ x_{i2} \geq 0 \cap \lambda_2 - 2x_i\lambda_1 \geq \beta_4 \}, \\
R_{i3} &:= \{ x_{i1} \leq 0 \cap \lambda_2 - 2x_i\lambda_1 \leq \beta_4 \}, \\
R_{i4} &:= \{ x_{i1} \geq 0 \}.
%R_{i5} &:= \{  \lambda_2 - 2x_i\lambda_1 < \beta_3 \}.
%R_{i6} &:= \{ (\lambda_1 \geq 0, \lambda_2 ): x_{i1} \geq x_h \},
\end{align*}
%for intercept $\beta_3 = w_1 c_1 - (1-w_1) c_2 - a(x_l+x_h)$,
%where $w_1 = \frac{(x_h-Q)}{(x_h-x_l)} \in [0,1]$.
\end{case}

\begin{case}
$\beta_3 < \beta_2 \cap \beta_4 \in [\beta_2,\beta_1]$. \\
\[
\tilde{R}_{i0} := R_{i1} \cup R_{i2}, \tilde{R}_{i1} := R_{i4}, \tilde{R}_{i2} := R_{i3} \cup R_{i5},
\]
where $R_{ij}$ regions are given by
\begin{align*}
R_{i1} &:= \{ x_{i2} \leq 0 \}, \\
R_{i2} &:= \{ x_{i2} \geq 0 \cap \lambda_2 - 2x_i\lambda_1 \geq \beta_4 \}, \\
R_{i3} &:= \{ x_{i1} \leq 0 \cap \lambda_2 - 2x_i\lambda_1 \leq \beta_4 \}, \\
R_{i4} &:= \{ x_{i1} \geq 0 \cap \lambda_2 - 2x_i\lambda_1 \geq \beta_3 \}, \\
R_{i5} &:= \{ \lambda_2 - 2x_i\lambda_1 \leq \beta_3 \}. 
\end{align*}
%for intercept $\beta_3 = (c_1-c_2)/2 - 2a Q$.
\end{case}

\begin{case}
$\beta_3 < \beta_2 \cap \beta_4 \notin [\beta_2,\beta_1]$. \\
\[
\tilde{R}_{i0} :=  R_{i1}, \tilde{R}_{i1} := R_{i3}, \tilde{R}_{i2} := R_{i2} \cup R_{i4},
\]
where $R_{ij}$ regions are given by
\begin{align*}
R_{i1} &:= \{ x_{i2} \leq 0 \}, \\
R_{i2} &:= \{ x_{i2} \geq 0 \cap x_{i1} \leq 0 \}, \\
R_{i3} &:= \{ x_{i1} \geq 0 \cap \lambda_2 - 2x_i\lambda_1 \geq \beta_3 \}, \\
R_{i4} &:= \{ \lambda_2 - 2x_i\lambda_1 \leq \beta_3 \}.
\end{align*}
%for intercept $\beta_3 = (c_1-c_2)/2 - 2a Q$.
\end{case}

\begin{case}
$\beta_3 \geq \beta_2 \cap \beta_4 \notin [\beta_2,\beta_1]$. 
\[
\tilde{R}_{i0} := R_{i1}, \tilde{R}_{i1} := \varnothing, \tilde{R}_{i2} := R_{i2} \cup R_{i3},
\]
where $R_{ij}$ regions are given by
\begin{align*}
R_{i1} &:= \{ x_{i2} \leq 0 \}, \\
R_{i2} &:= \{ x_{i1} \leq 0 \cap x_{i2} \geq 0 \}, \\
R_{i3} &:= \{ x_{i1} \geq 0 \}. 
\end{align*}
\end{case}

\begin{proof}
It turns out to be useful to solve the \textit{unconstrained} problem $\sup_{x \in \mathbb{R}} g_i$ first.
%The proof follows the approach in \citet{bartl2020c}. 
To start, use the $\max$ relation 
\begin{equation}
(Q-x)^+ = (Q-x) + (x-Q)^+
\end{equation}
to express function $g_i(x,\lambda,Q)$ as
\begin{equation} 
g_i = (c_1+c_2)(x - Q)^+ + c_2(Q-x) -a x^2 + 2b_i x. 
\end{equation}
The first order optimality conditions (for left and right derivatives) for (unconstrained) $\sup_{x \in \mathbb{R}} g_i$ say that 
\begin{align}
(c_1+c_2) \mathbbm{1}_{(0,\infty)}(x - Q) -c_2 - 2ax + 2b_i \geq 0, \\ 
(c_1+c_2) \mathbbm{1}_{[0,\infty)}(x - Q) -c_2 - 2ax + 2b_i \leq 0,
\end{align}
which leads to three cases for the critical point $x^*$. 
%Note the constraint $x \in [x_l,x_h]$ is handled by setting
%$x^* = \min{\max{x^*,x_l},x_h}$.
\setcounter{case}{0}
\begin{case}
$x^* > Q$\\
$x^* > Q \implies x^* = x_{i2} \; \text{for} \; Q < x_{i2}$.
\end{case}
\begin{case}
$x^* < Q$\\
$x^* < Q \implies x^* = x_{i1} \; \text{for} \; Q > x_{i1}$.
\end{case}
\begin{case}
$x^* = Q$\\
This case violates the first order optimality conditions and hence does not occur.
\end{case}
Consequently, there are three cases for $Q$.
\setcounter{case}{0}
\begin{case}
$Q \leq x_{i1}$\\
$x^* =  x_{i2} \implies g_i = \hat{g}_i = \tilde{c}(x_{i2} - Q) + 2 a x_{i1} x_{i2} - a x_{i2}^2 + c_2 Q$.
\end{case}
\begin{case}
$Q \geq x_{i2}$\\
$x^* = x_{i1} \implies g = \check{g}_i = a x_{i1}^2 + c_2 Q$.
\end{case}
\begin{case}
$Q \in \big(x_{i1},x_{i2}\big)$\\
$x^* \in \{ x_{i1}, x_{i2} \} \implies g = \max{ \hat{g}_i, \check{g}_i } \implies \\ g_i
= \begin{cases}
	\hat{g}_i, & \text{if }  x_{i1} < Q \leq \tilde{Q},\\
	\check{g}_i, & \text{if } \tilde{Q} < Q < x_{i2}.  
	\end{cases}
$
\end{case}
The value $(x_{i1}+x_{i2})/2$ for $\tilde{Q}$ is derived by equating function values for $\hat{g}_i$ and $\check{g}_i$.
Therefore, for the unconstrained problem the $(\lambda_1 \geq 0, \lambda_2)$ half-plane is partitioned into \textit{two} regions above and below
the cut line $Q = \tilde{Q} = (x_{i1}+x_{i2})/2$.
%Collecting function values for $g_i$ and bracketing conditions for $Q$, we arrive at the expression for $g_i$ given in the proposition.
Now let us consider the constrained problem $\sup_{x \geq 0} g_i$. We can reason that the optimal solution $x^*$ is such that $x^* \in S_{012} := \{0,x_{i1},x_{i2}\}$. It remains to determine for which regions of the half-plane that $x^*$ takes on one of the values in $S_{012}$. 
As it turns out, there are four cases depending on the relative positions of the intercepts $\{ \beta_1,\beta_2,\beta_3,\beta_4 \}$.
\begin{table}[!htb]
  \begin{center}
    \caption{Intercepts}
    \label{tab:table1}
    \begin{tabular}{|c|c|} % <-- Changed to S here.
	\hline
      \textbf{Intercept $\beta_{j}$} & Interpretation \\
      \hline
      $\beta_{1}$ & $x_{i2} = 0$ \\
	 \hline	
      $\beta_{2}$ & $x_{i1} = 0$ \\
	\hline
      $\beta_{3}$ & $g_{i1} = g_{i2}$ \\
	\hline
      $\beta_{4}$ & $g_{i0} = g_{i2}$ \\
	\hline
    \end{tabular}
  \end{center}
\end{table}

\setcounter{case}{0}
\begin{case}
$\beta_3 \geq \beta_2 \cap \beta_4 \in [\beta_2,\beta_1]$. \\
Refer to Figure 3 and Table 3 for the analysis. Let us work with the following form for the equations for lines $L_{ij}$, with intercepts $\beta_j$, that cut the half-plane into regions:
\begin{equation}
\lambda_2 = 2\lambda_1 x_i + \beta_j.
\end{equation}
add also define sets $S_{12} := \{ x_{i1}, x_{i2} \}$ and $S_{02} := \{ 0, x_{i2} \}$. 
The only choice for $x^* \in S_{012}$ for $R_{i1}$ is 0. Since $\beta_3 > \beta_2$, the choice $x_{i1}$ for $x^*$ is not available above $\beta_2$.
Furthermore, the choice for $x^*$ in $R_{i4}$ is $x_{i2}$.
Some algebra gives that the choice for $x^* \in S_{02}$ is 0 for $R_{i2}$ and $x_{i2}$ for $R_{i3}$. 
Substituting the appropriate $x^*$ per region into $g_i$ gives the result.

\begin{figure}
\centering
\begin{tikzpicture}
    \draw [thin, gray, ->] (0,-2) -- (0,2)      % draw y-axis line
        node [above, black] {$\lambda_2$};  % add label for y-axis
    
    \draw [thin, gray, ->] (-1,0) -- (5,0)      % draw x-axis line
        node [right, black] {$\lambda_1$};    % add label for x-axis
    
    %\draw [draw=orange,] (0,2) -- (5,3);	% draw the graph
    \draw [draw=brown,] (0,1) -- (5,2);	% draw the graph
    \draw [draw=green,] (0,0) -- (5,1);	% draw the graph
    \draw [draw=blue,] (0,-1) -- (5,0);	% draw the graph
    %\draw [draw=black,] (0,-2) -- (5,-1);	% draw the graph

    %\node [left] at (0,2) {$\beta_1$};      % label y-intercept
    \node [left] at (0,1) {$\beta_1$};      % label y-intercept
    \node [left] at (0,0) {$\beta_4$};      % label y-intercept
    \node [left] at (0,-1) {$\beta_2$};      % label y-intercept
    %\node [left] at (0,-2) {$\beta_5$};      % label y-intercept
    %\node [below] at (1,0) {$1$};      % label x-intercept

   %\node [above] at (2.5,2.5) {$R_{i1}$};      % label region
   \node [above] at (2.5,1.5) {$R_{i1}$};      % label region
   \node [above] at (2.5,0.5) {$R_{i2}$};      % label region
   \node [above] at (2.5,-0.5) {$R_{i3}$};      % label region
   \node [above] at (2.5,-1.5) {$R_{i4}$};      % label region
   %\node [above] at (2.5,-2.5) {$R_{i6}$};      % label region

\end{tikzpicture}
\caption{Case 1: Cut Lines for $(\lambda_1 \geq 0, \lambda_2)$ Half-Plane}
\end{figure}
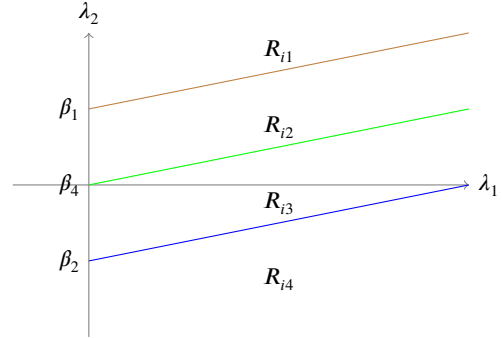
\end{case}

\begin{case}
$\beta_3 < \beta_2 \cap \beta_4 \in [\beta_2,\beta_1]$. \\
Refer to Figure 4 and Table 3 for the analysis. The approach is similar to Case 1. 
The only choice for $x^*$ for $R_{i1}$ is 0. As before, the choice for $x^* \in S_{02}$ for $R_{i2}$ is $0$ and $x_{i2}$ for $R_{i3}$.
Some algebra gives that the choice for $x^* \in S_{12}$ is $x_{i1}$ for $R_{i4}$ and $x_{i2}$ for $R_{i5}$. 
As before, substitution for $x^*$ per region into $g_i$ gives the result.

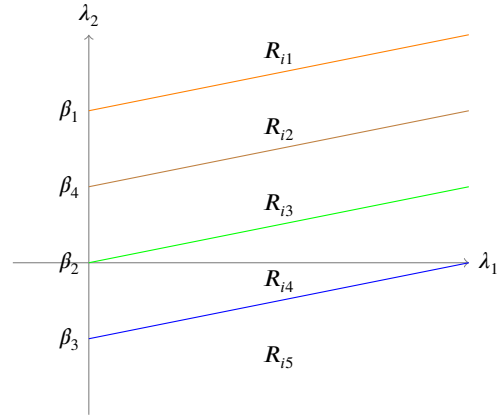
\begin{figure}
\centering
\begin{tikzpicture}
    \draw [thin, gray, ->] (0,-2) -- (0,3)      % draw y-axis line
        node [above, black] {$\lambda_2$};  % add label for y-axis
    
    \draw [thin, gray, ->] (-1,0) -- (5,0)      % draw x-axis line
        node [right, black] {$\lambda_1$};    % add label for x-axis
    
    \draw [draw=orange,] (0,2) -- (5,3);	% draw the graph
    \draw [draw=brown,] (0,1) -- (5,2);	% draw the graph
    \draw [draw=green,] (0,0) -- (5,1);	% draw the graph
    \draw [draw=blue,] (0,-1) -- (5,0);	% draw the graph
    %\draw [draw=black,] (0,-2) -- (5,-1);	% draw the graph

    \node [left] at (0,2) {$\beta_1$};      % label y-intercept
    \node [left] at (0,1) {$\beta_4$};      % label y-intercept
    \node [left] at (0,0) {$\beta_2$};      % label y-intercept
    \node [left] at (0,-1) {$\beta_3$};      % label y-intercept
    %\node [left] at (0,-2) {$\beta_5$};      % label y-intercept
    %\node [below] at (1,0) {$1$};      % label x-intercept

   \node [above] at (2.5,2.5) {$R_{i1}$};      % label region
   \node [above] at (2.5,1.5) {$R_{i2}$};      % label region
   \node [above] at (2.5,0.5) {$R_{i3}$};      % label region
   \node [above] at (2.5,-0.5) {$R_{i4}$};      % label region
   \node [above] at (2.5,-1.5) {$R_{i5}$};      % label region
   %\node [above] at (2.5,-2.5) {$R_{i6}$};      % label region

\end{tikzpicture}
\caption{Case 2: Cut Lines for $(\lambda_1 \geq 0, \lambda_2)$ Half-Plane}
\end{figure}
\end{case}

\begin{case}
$\beta_3 < \beta_2 \cap \beta_4 \notin [\beta_2,\beta_1]$. \\
Refer to Figure 5 and Table 3 for the analysis. %The approach is similar to Case 1. 
The only choice for $x^*$ for $R_{i1}$ is 0. Since $\beta_4 < \beta_2$, the choice for $x^* \in S_{02}$ for $R_{i2}$ is $x_{i2}$.
As before, the choice for $x^*$ on either side of $\beta_3$ is $x_{i1}$ for $R_{i3}$ and $x_{i2}$ for $R_{i4}$.

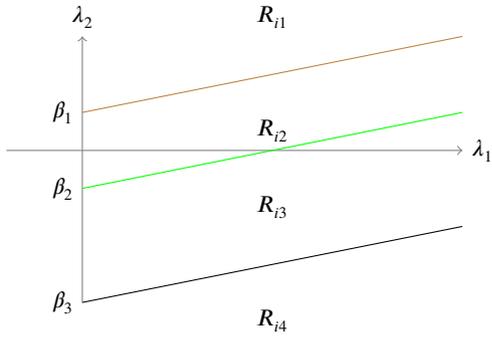
\begin{figure}
\centering
\begin{tikzpicture}
    \draw [thin, gray, ->] (0,-2) -- (0,1.5)      % draw y-axis line
        node [above, black] {$\lambda_2$};  % add label for y-axis
    
    \draw [thin, gray, ->] (-1,0) -- (5,0)      % draw x-axis line
        node [right, black] {$\lambda_1$};    % add label for x-axis
    
    %\draw [draw=orange,] (0,2) -- (5,3);	% draw the graph
    \draw [draw=brown,] (0,0.5) -- (5,1.5);	% draw the graph
    \draw [draw=green,] (0,-0.5) -- (5,0.5);	% draw the graph
    %\draw [draw=blue,] (0,-1) -- (5,0);	% draw the graph
    \draw [draw=black,] (0,-2) -- (5,-1);	% draw the graph
    %\draw [draw=red,] (0,-3) -- (5,-2);	% draw the graph

    %\node [left] at (0,2) {$\beta_1$};      % label y-intercept
    \node [left] at (0,0.5) {$\beta_1$};      % label y-intercept
    \node [left] at (0,-0.5) {$\beta_2$};      % label y-intercept
    %\node [left] at (0,-1) {$\beta_4$};      % label y-intercept
    \node [left] at (0,-2) {$\beta_3$};      % label y-intercept
    %\node [below] at (1,0) {$1$};      % label x-intercept

   %\node [above] at (2.5,2.5) {$R_{i1}$};      % label region
   \node [above] at (2.5,1.5) {$R_{i1}$};      % label region
   \node [above] at (2.5,0) {$R_{i2}$};      % label region
   \node [above] at (2.5,-1) {$R_{i3}$};      % label region
   %\node [above] at (2.5,-1.5) {$R_{i4}$};      % label region
   \node [above] at (2.5,-2.5) {$R_{i4}$};      % label region

\end{tikzpicture}
\caption{Case 3: Cut Lines for $(\lambda_1 \geq 0, \lambda_2)$ Half-Plane}
\end{figure}
\end{case}

\begin{case}
$\beta_3 \geq \beta_2 \cap \beta_4 \notin [\beta_2,\beta_1]$. \\
Refer to Figure 6 and Table 3 for the analysis. %The approach is similar to Case 1. 
The only choice for $x^*$ for $R_{i1}$ is 0. Since $\beta_3 \geq \beta_2$ and $\beta_4 \notin [\beta_2,\beta_1]$, the choice for $x^*$ for $R_{i2}$ is $x_{i2}$.
Same for $R_{i3}$. 

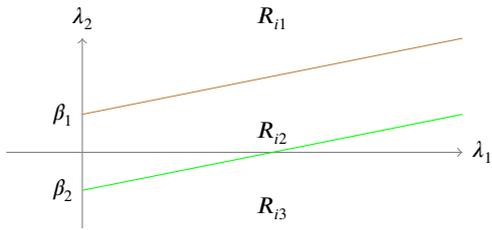
\begin{figure}
\centering
\begin{tikzpicture}
    \draw [thin, gray, ->] (0,-1) -- (0,1.5)      % draw y-axis line
        node [above, black] {$\lambda_2$};  % add label for y-axis
    
    \draw [thin, gray, ->] (-1,0) -- (5,0)      % draw x-axis line
        node [right, black] {$\lambda_1$};    % add label for x-axis
    
    %\draw [draw=orange,] (0,2) -- (5,3);	% draw the graph
    \draw [draw=brown,] (0,0.5) -- (5,1.5);	% draw the graph
    \draw [draw=green,] (0,-0.5) -- (5,0.5);	% draw the graph
    %\draw [draw=blue,] (0,-1) -- (5,0);	% draw the graph
    %\draw [draw=black,] (0,-2) -- (5,-1);	% draw the graph
    %\draw [draw=red,] (0,-3) -- (5,-2);	% draw the graph

    %\node [left] at (0,2) {$\beta_1$};      % label y-intercept
    \node [left] at (0,0.5) {$\beta_1$};      % label y-intercept
    \node [left] at (0,-0.5) {$\beta_2$};      % label y-intercept
    %\node [left] at (0,-1) {$\beta_4$};      % label y-intercept
    %\node [left] at (0,-2) {$\beta_3$};      % label y-intercept
    %\node [below] at (1,0) {$1$};      % label x-intercept

   %\node [above] at (2.5,2.5) {$R_{i1}$};      % label region
   \node [above] at (2.5,1.5) {$R_{i1}$};      % label region
   \node [above] at (2.5,0) {$R_{i2}$};      % label region
   \node [above] at (2.5,-1) {$R_{i3}$};      % label region
   %\node [above] at (2.5,-1.5) {$R_{i4}$};      % label region
   %\node [above] at (2.5,-2.5) {$R_{i4}$};      % label region

\end{tikzpicture}
\caption{Case 4: Cut Lines for $(\lambda_1 \geq 0, \lambda_2)$ Half-Plane}
\end{figure}
\end{case}
Collecting results for all cases gives us the result for the proposition.
\end{proof}
\endgroup
\end{document}